  \theoremstyle{definition}
  \newtheorem{defn}{\protect\definitionname}
  \theoremstyle{plain}
  \newtheorem{thm}{\protect\theoremname}
  \theoremstyle{plain}
  \newtheorem{lem}{\protect\lemmaname}
  \theoremstyle{plain}
  \newtheorem{cor}{\protect\corollaryname}
\author{
    \IEEEauthorblockN{Dionysios S. Kalogerias$^{\dagger}$ and Athina P. Petropulu\\}
    \IEEEauthorblockA{Department of Electrical and Computer   Engineering, Rutgers, The State University of New Jersey}
    \{d.kalogerias, athinap\}@rutgers.edu
    \thanks{$^{\dagger}$ Corresponding author.}
    \thanks{This work is supported by the National Science Foundation under Grant CNS-1239188.}
}
\providecommand{\definitionname}{Definition}
  \providecommand{\lemmaname}{Lemma}
\providecommand{\corollaryname}{Corollary}
\providecommand{\theoremname}{Theorem}
\providecommand{\corollaryname}{Corollary}
\providecommand{\definitionname}{Definition}
\providecommand{\lemmaname}{Lemma}
\providecommand{\theoremname}{Theorem}
\begin{document}

\title{On the Coherence Properties of\\
 Random Euclidean Distance Matrices}
\maketitle
\begin{abstract}
In the present paper we focus on the coherence properties of general
random Euclidean distance matrices, which are very closely related
to the respective matrix completion problem. This problem is of great
interest in several applications such as node localization in sensor
networks with limited connectivity. Our results can directly provide
the sufficient conditions under which an EDM can be successfully recovered
with high probability from a limited number of measurements.\end{abstract}
\begin{IEEEkeywords}
Random Euclidean Distance Matrices, Matrix Completion, Limited Connectivity,
Subspace Coherence
\end{IEEEkeywords}

\section{Introduction}

Considering $N$ points (or nodes) lying in $\mathbb{R}^{d}$ with
respective positions $\mathbf{p}_{i}\in\mathbb{R}^{d},i\in\mathbb{N}_{N}^{+}\equiv\left\{ 1,2,\ldots,N\right\} $,
the $\left(i,j\right)$-th entry of a Euclidean distance matrix (EDM)
is defined as 
\begin{equation}
\boldsymbol{\Delta}\left(i,j\right)\triangleq\left\Vert \mathbf{p}_{i}-\mathbf{p}_{j}\right\Vert _{2}^{2}\in\mathbb{R}_{+},\;\forall\,\left(i,j\right)\in\mathbb{N}_{N}^{+}\times\mathbb{N}_{N}^{+}.
\end{equation}
EDMs appear in a large variety of engineering applications, such as
sensor network positioning and localization \cite{MontanariOhPositioning,Drineas2006,EkamRam2012}, distributed
beamforming problems that rely on second order statistics of the internode
channels \cite{LiPetropuluPoor2011}, or molecular conformation \cite{CrippenHavel1988},
where, using NMR spectroscopy techniques, the distances of the atoms
forming the protein molecule are estimated, leading
to the determination of its structure. Recently, significant attention
has been paid to the problem of EDM completion from partial distance
measurements in sensor networks, which corresponds to scenarios with limited
connectivity among the network nodes (see, e.g., \cite{MontanariOhPositioning}).
It can be shown \cite{MontanariOhPositioning} that the rank of
an EDM is less or equal than $d+2$ and hence, for a large number of
nodes, such a matrix is always of low-rank. Therefore, under certain conditions,
matrix completion can be used to recover the missing entries of the
matrix. In this paper, we focus on those properties of general random
EDMs, which can provide sufficient conditions, under which the EDM
completion problem can be successfully solved with high probability.

\textit{Relation to the literature - } The special case where the
node coordinates are drawn from $\mathcal{U}\left[-1,1\right]$ has
been studied in \cite{MontanariOhPositioning}, \cite{EkamRam2012},
\cite{OhThesis2010}. In our paper, in addition to studying the more general case,
we point out some issues in the respective proofs in \cite{EkamRam2012,OhThesis2010},
which result in incorrect results. The correct results can be obtained
as a special case of the general results that we propose. More details
on the above issues are provided in Section IV.

The paper is organized as follows. In Section II, we provide a
brief introduction to the problem of matrix completion, as presented
and analyzed in \cite{Candes&Recht2009}. In Section III, we present
our main results. Finally, in Section IV, we discuss the connection
of our results with the respective ones presented in \cite{MontanariOhPositioning,Drineas2006}
and \cite{EkamRam2012}.

\section{Low Rank Matrix Completion}

Consider a generic matrix $\mathbf{M}\in\mathbb{R}^{K\times L}$ of
rank $r$, whose compact Singular Value Decomposition (SVD) is given
by $\mathbf{M}=\mathbf{U}\boldsymbol{\Sigma}\mathbf{V}^{\boldsymbol{\mathit{T}}}\equiv\sum_{i\in\mathbb{N}_{r}^{+}}\sigma_{i}\left(\mathbf{M}\right)\mathbf{u}_{i}\mathbf{v}_{i}^{\boldsymbol{\mathit{t}}}$
and with column and row subspaces denoted as $U$ and $V$ respectively,
spanned by the sets $\left\{ \mathbf{u}_{i}\in\mathbb{R}^{K\times1}\right\} _{i\in\mathbb{N}_{r}^{+}}$
and $\left\{ \mathbf{v}_{i}\in\mathbb{R}^{L\times1}\right\} _{i\in\mathbb{N}_{r}^{+}}$,
respectively.

Let $\mathcal{P}\left(\mathbf{M}\right)\in\mathbb{R}^{K\times L}$
denote an entrywise sampling of $\mathbf{M}$. In all the analysis
that follows, we will adopt the theoretical framework presented in
\cite{Candes&Recht2009}, according to which one hopes to reconstruct
$\mathbf{M}$ from $\mathcal{P}\left(\mathbf{M}\right)$ by solving
the convex program 
\begin{equation}
\begin{array}{ll}
\mathrm{minimize} & \left\Vert \mathbf{X}\right\Vert _{*}\\
\mathrm{subject\, to} & \mathbf{X}\left(i,j\right)=\mathbf{M}\left(i,j\right),\quad\forall\,\left(i,j\right)\in\boldsymbol{\Omega},
\end{array}\label{eq:convex1}
\end{equation}
where the set $\boldsymbol{\Omega}$ contains all matrix coordinates
corresponding to the observed entries of $\mathbf{M}$ (contained
in $\mathcal{P}\left(\mathbf{M}\right)$) and where $\left\Vert \mathbf{X}\right\Vert _{*}$
represents the nuclear norm of $\mathbf{X}$.

Also in \cite{Candes&Recht2009}, the authors introduce the notion
of \textit{subspace coherence}, in order to derive specific conditions
under which the solution of (\ref{eq:convex1}) coincides with $\mathbf{M}$.
The formal definition of subspace coherence follows, in a slightly
more expanded form compared to the original definition stated in \cite{Candes&Recht2009}. 
\begin{defn}
\cite{Candes&Recht2009} Let $U\equiv\mathbb{R}^{r}\subseteq\mathbb{R}^{N}$
be a subspace spanned by the set of orthonormal vectors $\left\{ \mathbf{u}_{i}\in\mathbb{R}^{N\times1}\right\} _{i\in\mathbb{N}_{r}^{+}}$.
Also, define the matrix $\mathbf{U}\triangleq\left[\mathbf{u}_{1}\,\mathbf{u}_{2}\,\ldots\,\mathbf{u}_{r}\right]\in\mathbb{R}^{N\times r}$
and let $\mathbf{P}_{U}\triangleq\mathbf{U}\mathbf{U}^{\boldsymbol{\mathit{T}}}\in\mathbb{R}^{N\times N}$
be the orthogonal projection onto $U$. Then the coherence of $U$
with respect to the standard basis $\left\{ \mathbf{e}_{i}\right\} _{i\in\mathbb{N}_{N}^{+}}$
is defined as 
\begin{flalign}
\mu\left(U\right) & \triangleq\frac{N}{r}\sup_{i\in\mathbb{N}_{N}^{+}}\left\Vert \mathbf{P}_{U}\mathbf{e}_{i}\right\Vert _{2}^{2}\nonumber \\
 & \equiv\frac{N}{r}\sup_{i\in\mathbb{N}_{N}^{+}}\sum_{k\in\mathbb{N}_{r}^{+}}\left(\mathbf{U}\left(i,k\right)\right)^{2}.
\end{flalign}

\end{defn}
Additionally, the following crucial assumptions regarding the subspaces
$U$ and $V$ are of particular importance \cite{Candes&Recht2009}.\\
 \ \\
 $\mathbf{A0}\qquad\max\left\{ \mu\left(U\right),\mu\left(V\right)\right\} \leq\mu_{0}\in\mathbb{R}_{++}$.\\
 \ \\
 $\mathbf{A1}\qquad\left\Vert \sum_{i\in\mathbb{N}_{r}^{+}}\mathbf{u}_{i}\mathbf{v}_{i}^{\boldsymbol{\mathit{t}}}\right\Vert _{\infty}\leq\mu_{1}\sqrt{\dfrac{r}{KL}},\quad\mu_{1}\in\mathbb{R}_{++}$.\\
 \ \\
 Indeed, if the constants $\mu_{0}$ and $\mu_{1}$ associated with
the singular vectors of a matrix $\mathbf{M}$ are known, the following
theorem holds. 
\begin{thm}
\textup{\cite{Candes&Recht2009}} Let $\mathbf{M}\in\mathbb{R}^{K\times L}$
be a matrix of rank $r$ obeying $\mathbf{A0}$ and $\mathbf{A1}$
and set $N\triangleq\max\left\{ K,L\right\} $. Suppose we observe
$m$ entries of $\mathbf{M}$ with matrix coordinates sampled uniformly
at random. Then there exist constants $C$, $c$ such that if 
\begin{equation}
m\geq C\max\left\{ \mu_{1}^{2},\mu_{0}^{1/2}\mu_{1},\mu_{0}N^{1/4}\right\} Nr\beta\log N
\end{equation}
for some $\beta>2$, the minimizer to the program (\ref{eq:convex1})
is unique and equal to $\mathbf{M}$ with probability at least $1-cN^{-\beta}$.
For $r\leq\mu_{0}^{-1}N^{1/5}$ this estimate can be improved to 
\begin{equation}
m\geq C\mu_{0}N^{6/5}r\beta\log N,
\end{equation}
with the same probability of success. 
\end{thm}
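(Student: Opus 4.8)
The plan is to follow the dual-certificate route of \cite{Candes&Recht2009}. First I would pass from the uniform sampling of $m$ coordinates to the independent Bernoulli model in which each entry is observed with probability $p=m/(KL)$, since the failure probability of the convex program \eqref{eq:convex1} under Bernoulli sampling dominates, up to constants, the failure probability under uniform sampling of exactly $m$ entries; this makes the observation indicators independent and renders all subsequent concentration arguments tractable. Let $\mathcal{R}_{\Omega}$ be the random operator that keeps the observed entries and zeroes the rest, let $T=\{\,\mathbf{U}\mathbf{A}^{T}+\mathbf{B}\mathbf{V}^{T}:\mathbf{A},\mathbf{B}\,\}$ be the tangent space at $\mathbf{M}$ to the manifold of rank-$r$ matrices, and let $\mathcal{P}_{T}$, $\mathcal{P}_{T^{\perp}}=\mathcal{I}-\mathcal{P}_{T}$ be the associated orthogonal projections in the Frobenius inner product. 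The standard subgradient optimality argument then reduces exact recovery to two events: (i) $\mathcal{R}_{\Omega}$ is injective on $T$, and (ii) there is a dual certificate $\mathbf{Y}$ supported on $\Omega$ with $\mathcal{P}_{T}(\mathbf{Y})=\mathbf{U}\mathbf{V}^{T}$ and $\|\mathcal{P}_{T^{\perp}}(\mathbf{Y})\|_{2}<1$, the latter being the spectral norm.

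Second, I would establish the workhorse concentration estimate
\[
\left\|\,\mathcal{P}_{T}-p^{-1}\mathcal{P}_{T}\mathcal{R}_{\Omega}\mathcal{P}_{T}\,\right\|_{2}\le\tfrac{1}{2},
\]
valid with probability at least $1-cN^{-\beta}$ once $m\gtrsim\mu_{0}Nr\beta\log N$, the coherence bound $\mathbf{A0}$ entering through $\|\mathcal{P}_{T}(\mathbf{e}_{i}\mathbf{e}_{j}^{T})\|_{F}^{2}\le 2\mu_{0}r/N$. This simultaneously yields (i) and the invertibility on $T$ of $\mathcal{A}:=p^{-1}\mathcal{P}_{T}\mathcal{R}_{\Omega}\mathcal{P}_{T}$ with $\|\mathcal{A}^{-1}\|_{2}\le 2$, so that the explicit choice
\[
\mathbf{Y}=p^{-1}\mathcal{R}_{\Omega}\mathcal{P}_{T}\,\mathcal{A}^{-1}(\mathbf{U}\mathbf{V}^{T})
\]
is legitimate and satisfies $\mathcal{P}_{T}(\mathbf{Y})=\mathbf{U}\mathbf{V}^{T}$ by construction and $\mathbf{Y}\in\mathrm{range}(\mathcal{R}_{\Omega})$ trivially. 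Writing $\mathbf{Z}=\mathcal{A}^{-1}(\mathbf{U}\mathbf{V}^{T})\in T$ and using $\mathcal{P}_{T^{\perp}}\mathbf{Z}=0$, condition (ii) becomes $\|p^{-1}\mathcal{P}_{T^{\perp}}(\mathcal{R}_{\Omega}-p\mathcal{I})\mathbf{Z}\|_{2}<1$, which I would bound by the spectral norm of $(\mathcal{R}_{\Omega}-p\mathcal{I})$ applied to the single matrix $\mathbf{Z}$, expressed in terms of $\|\mathbf{Z}\|_{\infty}$, $\|\mathbf{Z}\|_{F}$ and $p$. The $\ell_{\infty}$ control of $\mathbf{Z}$ is obtained by expanding $\mathcal{A}^{-1}=\sum_{k\ge0}(\mathcal{P}_{T}-\mathcal{A})^{k}$ as a Neumann series and showing, using $\mathbf{A1}$, that each factor $(\mathcal{P}_{T}-\mathcal{A})$ inflates the $\ell_{\infty}$ norm only mildly, so that $\|\mathbf{Z}\|_{\infty}\lesssim\mu_{1}\sqrt{r/(KL)}$.

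The hard part is the probabilistic control of the two random operators above at the precision demanded by the theorem: bounding $\|\mathcal{P}_{T}-p^{-1}\mathcal{P}_{T}\mathcal{R}_{\Omega}\mathcal{P}_{T}\|_{2}$ and the $\ell_{\infty}$-to-spectral behaviour of $\mathcal{R}_{\Omega}-p\mathcal{I}$ on $T$. Absent a matrix Bernstein inequality, one must estimate high even moments such as $\mathbb{E}\,\mathrm{tr}\big[(\mathcal{P}_{T}\mathcal{R}_{\Omega}\mathcal{P}_{T}-p\mathcal{P}_{T})^{2q}\big]$ by a careful combinatorial enumeration of closed walks and then optimize $q\sim\log N$; it is precisely the slack in this enumeration, together with the coupling between the $\ell_{\infty}$ and Frobenius scales, that forces the extraneous factor $\max\{\mu_{1}^{2},\mu_{0}^{1/2}\mu_{1},\mu_{0}N^{1/4}\}$ and, in the small-rank regime $r\le\mu_{0}^{-1}N^{1/5}$, the exponent $N^{6/5}$. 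Everything else---the Bernoulli reduction, the verification $\mathcal{P}_{T}(\mathbf{Y})=\mathbf{U}\mathbf{V}^{T}$, the Neumann-series bookkeeping for $\|\mathbf{Z}\|_{\infty}$, and the final union bound collecting the failure probabilities into $cN^{-\beta}$---is comparatively routine.
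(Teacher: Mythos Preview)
The paper does not prove this theorem at all: it is quoted verbatim from \cite{Candes&Recht2009} as background material, with no accompanying argument. So there is no ``paper's own proof'' to compare against; the only relevant question is whether your sketch faithfully reproduces the Cand\`es--Recht argument.

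On that score, your outline is an accurate high-level summary of the original proof: the reduction from uniform to Bernoulli sampling, the tangent space $T$ and the dual-certificate sufficient conditions, the explicit ansatz $\mathbf{Y}=p^{-1}\mathcal{R}_{\Omega}\mathcal{P}_{T}\mathcal{A}^{-1}(\mathbf{U}\mathbf{V}^{T})$, the Neumann-series control of $\|\mathbf{Z}\|_{\infty}$ via $\mathbf{A1}$, and the trace-moment combinatorics (in lieu of a matrix Bernstein inequality, which was not yet standard at the time) that generate the $\max\{\mu_{1}^{2},\mu_{0}^{1/2}\mu_{1},\mu_{0}N^{1/4}\}$ factor and the $N^{6/5}$ exponent in the small-rank regime. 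Your identification of the moment enumeration as the technically heavy step is also correct. If anything, you have understated how much of the original paper is devoted to those moment bounds; the decoupling and graph-counting arguments span the bulk of \cite{Candes&Recht2009}, and a full proof would need to reproduce them rather than gesture at ``careful combinatorial enumeration.'' But as a plan, nothing is wrong or missing.
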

Of course, the lower the rank of $\mathbf{M}$, the less the required
number of observations for achieving exact reconstruction. Regarding
the rank the EDM $\boldsymbol{\Delta}$ at hand, one can easily prove
the following lemma \cite{MontanariOhPositioning}. 
\begin{lem}
Let $\boldsymbol{\Delta}\in\mathbb{R}^{N\times N}$ be an EDM corresponding
to the distances $N$ of points (nodes) in $\mathbb{R}^{d}$. Then,
$rank\left(\boldsymbol{\Delta}\right)\leq d+2$. 
\end{lem}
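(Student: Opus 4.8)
The plan is to write $\boldsymbol{\Delta}$ as a sum of three matrices, each of small rank, and then invoke the subadditivity of the rank. First I would expand the squared Euclidean norm defining each entry,
\[
\boldsymbol{\Delta}\left(i,j\right)=\left\Vert \mathbf{p}_{i}\right\Vert _{2}^{2}-2\mathbf{p}_{i}^{T}\mathbf{p}_{j}+\left\Vert \mathbf{p}_{j}\right\Vert _{2}^{2},\quad\forall\,\left(i,j\right)\in\mathbb{N}_{N}^{+}\times\mathbb{N}_{N}^{+}.
\]
Collecting the node coordinates into the matrix $\mathbf{P}\triangleq\left[\mathbf{p}_{1}\,\mathbf{p}_{2}\,\ldots\,\mathbf{p}_{N}\right]\in\mathbb{R}^{d\times N}$, letting $\mathbf{1}\in\mathbb{R}^{N}$ denote the all-ones vector, and defining $\mathbf{c}\in\mathbb{R}^{N}$ through $\mathbf{c}\left(i\right)\triangleq\left\Vert \mathbf{p}_{i}\right\Vert _{2}^{2}$, this identity reads, in matrix form, $\boldsymbol{\Delta}=\mathbf{c}\mathbf{1}^{T}+\mathbf{1}\mathbf{c}^{T}-2\mathbf{P}^{T}\mathbf{P}$.

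Next I would bound the rank of each summand separately. The outer products $\mathbf{c}\mathbf{1}^{T}$ and $\mathbf{1}\mathbf{c}^{T}$ each have rank at most one, whereas $\mathbf{P}^{T}\mathbf{P}$ has rank equal to $rank\left(\mathbf{P}\right)\leq\min\left\{ d,N\right\} \leq d$, simply because $\mathbf{P}$ has only $d$ rows. Applying $rank\left(\mathbf{A}+\mathbf{B}\right)\leq rank\left(\mathbf{A}\right)+rank\left(\mathbf{B}\right)$ to the three-term decomposition then gives $rank\left(\boldsymbol{\Delta}\right)\leq1+1+d=d+2$, which is the assertion.

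I do not anticipate a genuine obstacle here; the argument is essentially bookkeeping. The only steps that deserve a moment of care are verifying the matrix identity entrywise --- that the $\left(i,j\right)$-th entry of $\mathbf{c}\mathbf{1}^{T}+\mathbf{1}\mathbf{c}^{T}-2\mathbf{P}^{T}\mathbf{P}$ is indeed $\left\Vert \mathbf{p}_{i}\right\Vert _{2}^{2}+\left\Vert \mathbf{p}_{j}\right\Vert _{2}^{2}-2\mathbf{p}_{i}^{T}\mathbf{p}_{j}$ --- and noting that the result is genuinely an inequality, since in degenerate point configurations (for instance when all $\mathbf{p}_{i}$ coincide, or when $\mathbf{c}$ happens to lie in the span of $\mathbf{1}$ together with the rows of $\mathbf{P}^{T}\mathbf{P}$) the actual rank can be strictly smaller than $d+2$. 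A sharper count is possible after translating the centroid of the points to the origin, but it is not needed for the stated claim.
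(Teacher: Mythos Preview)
Your argument is correct and matches the paper's own treatment: although the paper states the lemma without a separate proof, in the proof of Theorem~2 it writes exactly the same decomposition $\boldsymbol{\Delta}=\boldsymbol{1}\mathbf{p}^{T}-2\sum_{i\in\mathbb{N}_{d}^{+}}\mathbf{x}_{i}\mathbf{x}_{i}^{T}+\mathbf{p}\boldsymbol{1}^{T}$ (your $\mathbf{c}$ is their $\mathbf{p}$, your $\mathbf{P}^{T}\mathbf{P}$ is their $\sum_{i}\mathbf{x}_{i}\mathbf{x}_{i}^{T}$), and then packages it as $\boldsymbol{\Delta}=\mathbf{X}\mathbf{D}\mathbf{X}^{T}$ with $\mathbf{X}\in\mathbb{R}^{N\times(d+2)}$, which gives the rank bound immediately.
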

Thus, in the most common cases of interest, that is, when $d$ equals
$2$ or $3$ and for sufficiently large number of nodes $N$, $rank\left(\boldsymbol{\Delta}\right)\ll N$
and consequently the problem of recovering
$\boldsymbol{\Delta}$ from a restricted number of observations is of great interest.

\section{The Coherence of Random EDMs}

According to Theorem 1, the bound of the minimum number of observations
required for the exact recovery of $\boldsymbol{\Delta}$ from $\mathcal{P}\left(\boldsymbol{\Delta}\right)$,
involves the parameters $\mu_{0}$ and $\mu_{1}$. Next, we derive
a general result, which provides estimates of these parameters in
a probabilistic fashion. 
\begin{thm}
Consider $N$ points (nodes) lying almost surely in a $d$-dimensional
convex polytope defined as $\mathcal{H}_{d}\triangleq\left[a,b\right]^{d}$
with $a<0<b$ and let $\mathbf{p}_{i}\triangleq\left[x_{i1}\, x_{i2}\,\ldots\, x_{id}\right]^{\boldsymbol{\mathit{t}}}\in\mathcal{H}_{d},i\in\mathbb{N}_{N}^{+}$
denote the position of node $i$, where $x_{ij}\in\mathcal{H}_{1},j\in\mathbb{N}_{d}^{+}$
represents its respective $j$-th position coordinate. Assume that
each $x_{ij},\left(i,j\right)\in\mathbb{N}_{N}^{+}\times\mathbb{N}_{d}^{+}$
is drawn independently and identically according to an arbitrary but
atomless probability measure $P$ with finite moments up to order
$4$. Define 
\begin{flalign*}
m_{k} & \triangleq\mathbb{E}\left\{ x_{ij}^{k}\right\} \equiv\int x^{k}dP<\infty,
\end{flalign*}
$k\in\mathbb{N}_{4}^{+},\forall\,\left(i,j\right)\in\mathbb{N}_{N}^{+}\times\mathbb{N}_{d}^{+}$
and let, without any loss of generality, $m_{1}\equiv0$. Also, define
$c\triangleq\max\left\{ \left|a\right|,\left|b\right|\right\} $ and
pick a $t\in\left[0,1\right]$ and a probability of failure $\gamma\in\left[0,1\right]$.
Then, as long as 
\begin{equation}
N\geq\dfrac{2\theta\left(\log\left(d+2\right)-\log\left(\gamma\right)\right)}{\left(1-t\right)^{2}},\label{eq:Condition_N}
\end{equation}
the associated Euclidean distance matrix $\boldsymbol{\Delta}$ with
worst case rank $d+2$ obeys the assumptions $\mathbf{A0}$ and $\mathbf{A1}$
with 
\begin{equation}
\mu_{0}=\dfrac{\theta}{t\left(d+2\right)}\quad and\quad\mu_{1}=\dfrac{\theta}{t\sqrt{d+2}},\label{eq:coherence_constants}
\end{equation}
with probability of success at least $1-\gamma$, where the constant
$\theta$ (that is, independent of $N$) is defined as 
\begin{equation}
\theta\triangleq\dfrac{1+dc^{2}+d^{2}c^{4}}{\lambda^{*}},
\end{equation}
with $\lambda^{*}\triangleq\min\left\{ \lambda_{1},\lambda_{2},\lambda_{3},m_{2}\right\} ,$
and where \textup{$\lambda_{1},\lambda_{2},\lambda_{3}$ }are the
real and positive solutions to the cubic equation\textup{ 
\begin{equation}
\sum_{i\in\mathbb{N}_{3}}\alpha_{i}\lambda^{i}=0,
\end{equation}
}with\textup{ 
\begin{flalign}
\alpha_{0} & \triangleq \left(m_{4}m_{2} - m_{2}^{3} - m_{3}^{2}\right)d,\label{eq:alpha0}\\
\alpha_{1} & \triangleq-m_{2}^{3}d^{2}+\nonumber \\
 & +\left(m_{2}^{3}+m_{3}^{2}+m_{2}^{2}-m_{4}-m_{4}m_{2}\right)d-m_{2},\label{eq:alpha1}\\
\alpha_{2} & \triangleq m_{2}^{2}d^{2}+\left(m_{4}-m_{2}^{2}\right)d+m_{2}+1\quad\mathrm{and}\label{eq:alpha2}\\
\alpha_{3} & \triangleq-1.\label{eq:alpha3}
\end{flalign}
} 
\end{thm}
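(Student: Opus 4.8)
The plan is to compute, or tightly bound, the coherence of the column space of a random EDM $\boldsymbol{\Delta}$, and then invoke a concentration argument to control the supremum over rows. The starting point is the standard factorization of an EDM: writing $\mathbf{P}\triangleq[\mathbf{p}_1\,\ldots\,\mathbf{p}_N]^{\boldsymbol{t}}\in\mathbb{R}^{N\times d}$ and $\mathbf{s}\in\mathbb{R}^N$ with $\mathbf{s}(i)\triangleq\|\mathbf{p}_i\|_2^2$, one has $\boldsymbol{\Delta}=\mathbf{s}\mathbf{1}^{\boldsymbol{t}}+\mathbf{1}\mathbf{s}^{\boldsymbol{t}}-2\mathbf{P}\mathbf{P}^{\boldsymbol{t}}$, so the column space $U$ of $\boldsymbol{\Delta}$ is contained in the span of the $d+2$ vectors $\{\mathbf{1},\mathbf{s},\mathbf{P}\mathbf{e}_1,\ldots,\mathbf{P}\mathbf{e}_d\}$. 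Since $\boldsymbol{\Delta}$ is symmetric, $V=U$, so it suffices to bound $\mu(U)$; and by the worst-case rank bound (Lemma~2) we may as well work with the (generically $(d+2)$-dimensional) subspace $W\triangleq\mathrm{span}\{\mathbf{1},\mathbf{s},\mathbf{P}\mathbf{e}_1,\ldots,\mathbf{P}\mathbf{e}_d\}$, noting $\mu(U)\le\frac{d+2}{r}\,\mu(W)\cdot$(appropriate scaling) — more carefully, $\|\mathbf{P}_U\mathbf{e}_i\|_2\le\|\mathbf{P}_W\mathbf{e}_i\|_2$ since $U\subseteq W$, so controlling $\|\mathbf{P}_W\mathbf{e}_i\|_2^2$ uniformly in $i$ gives the coherence constants for $U$ with $r$ replaced by its worst-case value $d+2$.

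Next I would get an explicit handle on $\mathbf{P}_W$. Collect the spanning vectors into $\mathbf{B}\triangleq[\mathbf{1}\;\mathbf{s}\;\mathbf{P}]\in\mathbb{R}^{N\times(d+2)}$; then $\mathbf{P}_W=\mathbf{B}(\mathbf{B}^{\boldsymbol{t}}\mathbf{B})^{-1}\mathbf{B}^{\boldsymbol{t}}$ and $\|\mathbf{P}_W\mathbf{e}_i\|_2^2=\mathbf{b}_i^{\boldsymbol{t}}(\mathbf{B}^{\boldsymbol{t}}\mathbf{B})^{-1}\mathbf{b}_i$ where $\mathbf{b}_i^{\boldsymbol{t}}$ is the $i$-th row of $\mathbf{B}$, i.e. $\mathbf{b}_i=(1,\|\mathbf{p}_i\|_2^2,x_{i1},\ldots,x_{id})^{\boldsymbol{t}}$. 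The key observation is that $\frac1N\mathbf{B}^{\boldsymbol{t}}\mathbf{B}=\frac1N\sum_{i}\mathbf{b}_i\mathbf{b}_i^{\boldsymbol{t}}$ is an empirical average of i.i.d. rank-one matrices whose expectation $\mathbf{G}\triangleq\mathbb{E}\{\mathbf{b}_i\mathbf{b}_i^{\boldsymbol{t}}\}$ is a fixed, explicitly computable $(d+2)\times(d+2)$ matrix whose entries are polynomials in the moments $m_2,m_3,m_4$ (using $m_1=0$ and independence across coordinates). I would compute $\mathbf{G}$ block-wise; its smallest eigenvalue is exactly what the cubic equation \eqref{eq:alpha0}--\eqref{eq:alpha3} and the quantity $\lambda^*=\min\{\lambda_1,\lambda_2,\lambda_3,m_2\}$ encode — the characteristic polynomial of the relevant block factors, with one factor contributing the root $m_2$ (from the $\mathbf{P}$-block, which decouples) and the cubic $\sum\alpha_i\lambda^i=0$ giving the eigenvalues of the $3\times3$ block coupling $\mathbf{1}$, $\mathbf{s}$, and the diagonal of the coordinate second moments. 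Meanwhile $\mathbf{b}_i^{\boldsymbol{t}}\mathbf{b}_i=1+\|\mathbf{p}_i\|_2^2+\|\mathbf{p}_i\|_2^4\le 1+dc^2+d^2c^4$ almost surely, which is exactly the numerator of $\theta$. So, on the event that $\lambda_{\min}(\frac1N\mathbf{B}^{\boldsymbol{t}}\mathbf{B})\ge t\,\lambda^*$, we get $\|\mathbf{P}_W\mathbf{e}_i\|_2^2\le\frac{1+dc^2+d^2c^4}{Nt\lambda^*}=\frac{\theta}{tN}$ for every $i$, which translates precisely into $\mu(U)\le\frac{N}{d+2}\cdot\frac{\theta}{tN}=\frac{\theta}{t(d+2)}=\mu_0$; the bound on $\mu_1$ follows similarly by bounding $\|\mathbf{P}_U\mathbf{e}_i\mathbf{e}_j^{\boldsymbol{t}}\mathbf{P}_V\|$-type quantities, or directly from $\|\mathbf{U}\mathbf{U}^{\boldsymbol{t}}\|_\infty\le\sup_i\|\mathbf{P}_U\mathbf{e}_i\|_2^2$ and the relation between $\mu_0$ and $\mu_1$ in the symmetric rank-$(d+2)$ case, giving $\mu_1=\theta/(t\sqrt{d+2})$.

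The remaining — and genuinely substantive — step is the concentration bound: showing that $\lambda_{\min}(\frac1N\mathbf{B}^{\boldsymbol{t}}\mathbf{B})\ge t\lambda^*$ fails with probability at most $\gamma$, provided $N$ satisfies \eqref{eq:Condition_N}. I would apply a matrix Chernoff / matrix Bernstein inequality (Ahlswede–Winter or Tropp) to the sum $\sum_i\mathbf{b}_i\mathbf{b}_i^{\boldsymbol{t}}$: the summands are positive semidefinite, i.i.d., with $\mathbb{E}$ equal to $N\mathbf{G}$ and with $\|\mathbf{b}_i\mathbf{b}_i^{\boldsymbol{t}}\|=\|\mathbf{b}_i\|_2^2\le 1+dc^2+d^2c^4$ uniformly bounded. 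The matrix Chernoff lower-tail bound then gives $\mathbb{P}\{\lambda_{\min}(\tfrac1N\mathbf{B}^{\boldsymbol{t}}\mathbf{B})\le t\lambda_{\min}(\mathbf{G})\}\le (d+2)\exp(-(1-t)^2 N\lambda_{\min}(\mathbf{G})/(2R))$ with $R=1+dc^2+d^2c^4$, i.e. exponent $-(1-t)^2 N/(2\theta)$ after using $\lambda_{\min}(\mathbf{G})\ge\lambda^*$ and the definition of $\theta$. Setting the right-hand side equal to $\gamma$ and solving for $N$ yields exactly \eqref{eq:Condition_N}. The main obstacles I anticipate are (i) verifying carefully that $\lambda^*$ as defined — in particular that the cubic's roots are real and positive and that the $m_2$ root is correctly separated out — genuinely lower-bounds $\lambda_{\min}(\mathbf{G})$, which requires computing $\mathbf{G}$ and its characteristic polynomial correctly and checking its block structure; and (ii) being careful with the passage from the worst-case rank $d+2$ to the actual rank $r$ in the definition of $\mu(U)$, since the coherence is normalized by $r$, not $d+2$ — this is where one uses that Theorem~1's hypotheses are monotone in the constants and that $U\subseteq W$ with $\dim W=d+2$ almost surely under the atomless assumption on $P$.
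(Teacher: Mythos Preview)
Your proposal is correct and follows essentially the same route as the paper: factor $\boldsymbol{\Delta}=\mathbf{X}\mathbf{D}\mathbf{X}^{\boldsymbol{t}}$ with rows $\mathbf{X}_i=[1,\mathbf{p}_i^{\boldsymbol{t}},\|\mathbf{p}_i\|_2^2]$, bound the leverage scores $\|\mathbf{P}_W\mathbf{e}_i\|_2^2\le\|\mathbf{X}_i\|_2^2/\lambda_{\min}(\mathbf{X}^{\boldsymbol{t}}\mathbf{X})$, bound the numerator deterministically by $1+dc^2+d^2c^4$, and control the denominator via the matrix Chernoff lower tail applied to $\sum_i\mathbf{X}_i^{\boldsymbol{t}}\mathbf{X}_i$ with $\mathbb{E}\{\mathbf{X}_i^{\boldsymbol{t}}\mathbf{X}_i\}=\mathbf{R}_d$ (your $\mathbf{G}$), whose characteristic polynomial factors as $(m_2-\lambda)^{d-1}$ times the stated cubic. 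Two small clarifications: the $d$ coordinate columns do \emph{not} decouple from $\mathbf{s}$ when $m_3\neq 0$ (the factor $(m_2-\lambda)^{d-1}$ arises from a Schur-complement computation, not a block-diagonal split), and your concern (ii) is unnecessary because the atomless hypothesis makes $\mathbf{X}$ full column rank almost surely, hence $r=d+2$ and $U=W$ exactly.
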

Before we proceed with the proof of the theorem, let us state a well
known result from random matrix theory, the matrix Chernoff
bound (exponential form - see \cite{TroppTails2010}, Remark 5.3),
which will come in handy in the last part of the proof. 
\begin{lem}
\textup{\cite{TroppTails2010}} Consider a finite sequence of $N$
Hermitian and statistically independent random matrices $\left\{ \mathbf{F}_{i}\in\mathbb{C}^{K\times K}\right\} _{i\in\mathbb{N}_{N}^{+}}$
satisfying 
\begin{equation}
\mathbf{F}_{i}\succeq\mathbf{0}\quad and\quad\lambda_{max}\left(\mathbf{F}_{i}\right)\leq R,\quad\forall\, i\in\mathbb{N}_{N}^{+}
\end{equation}
and define the constants 
\begin{equation}
\xi_{min\left(max\right)}\triangleq\lambda_{min\left(max\right)}\left(\sum_{i\in\mathbb{N}_{N}^{+}}\mathbb{E}\left\{ \mathbf{F}_{i}\right\} \right).
\end{equation}
Define $\mathbf{F}_{s}\triangleq\sum_{i\in\mathbb{N}_{N}^{+}}\mathbf{F}_{i}$.
Then, it is true that 
\begin{multline}
\mathbb{P}\left\{ \lambda_{min}\left(\mathbf{F}_{s}\right)\leq t\xi_{min}\right\} \leq K\exp\left(-\dfrac{\xi_{min}}{2R}\left(1-t\right)^{2}\right),
\end{multline}
$\forall\, t\in\left[0,1\right]$, and 
\[
\mathbb{P}\left\{ \lambda_{max}\left(\mathbf{F}_{s}\right)\geq t\xi_{max}\right\} \leq K\left(\dfrac{e}{t}\right)^{\frac{t\xi_{max}}{R}},\quad\forall\, t\geq e.
\]

\end{lem}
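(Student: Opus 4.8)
The plan is to use the \emph{matrix Laplace transform} (matrix Chernoff) method of \cite{TroppTails2010}, handling both tails through a single mechanism that differs only in the sign of an auxiliary scalar $\theta$. For the upper tail, I would first observe that for any $\theta>0$ the event $\{\lambda_{max}(\mathbf{F}_{s})\geq s\}$ coincides with $\{\lambda_{max}(e^{\theta\mathbf{F}_{s}})\geq e^{\theta s}\}$; since $e^{\theta\mathbf{F}_{s}}$ is positive definite, its largest eigenvalue is dominated by its trace, and a Markov inequality then gives
\begin{equation}
\mathbb{P}\{\lambda_{max}(\mathbf{F}_{s})\geq s\}\leq e^{-\theta s}\,\mathbb{E}\{\mathrm{tr}\,e^{\theta\mathbf{F}_{s}}\}.\nonumber
\end{equation}
The lower tail is treated symmetrically by choosing $\theta<0$ and using that $\lambda_{max}(e^{\theta\mathbf{F}_{s}})=e^{\theta\lambda_{min}(\mathbf{F}_{s})}$ in that case.

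The crucial step, and the one I expect to be the main obstacle, is to bound the matrix moment generating function $\mathbb{E}\{\mathrm{tr}\,e^{\theta\mathbf{F}_{s}}\}$ for a sum of independent summands, where the matrix structure genuinely bites because the exponential of a sum of non-commuting matrices does not factor. I would invoke \emph{Lieb's concavity theorem} to establish the subadditivity of the matrix cumulant generating function, namely
\begin{equation}
\mathbb{E}\Big\{\mathrm{tr}\exp\Big(\theta\sum_{i\in\mathbb{N}_{N}^{+}}\mathbf{F}_{i}\Big)\Big\}\leq\mathrm{tr}\exp\Big(\sum_{i\in\mathbb{N}_{N}^{+}}\log\mathbb{E}\{e^{\theta\mathbf{F}_{i}}\}\Big).\nonumber
\end{equation}
Everything downstream is essentially scalar bookkeeping once this inequality is in place.

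Next I would control each factor using the hypothesis $\mathbf{0}\preceq\mathbf{F}_{i}\preceq R\mathbf{I}$. By convexity of the scalar exponential on $[0,R]$ one has the operator inequality $e^{\theta\mathbf{F}_{i}}\preceq\mathbf{I}+R^{-1}(e^{\theta R}-1)\mathbf{F}_{i}$; taking expectations and applying the operator monotonicity of the logarithm together with $\log(\mathbf{I}+\mathbf{X})\preceq\mathbf{X}$ gives $\log\mathbb{E}\{e^{\theta\mathbf{F}_{i}}\}\preceq R^{-1}(e^{\theta R}-1)\,\mathbb{E}\{\mathbf{F}_{i}\}$. Summing over $i$, using monotonicity of the trace-exponential, and bounding the trace of a Hermitian exponential by $K$ times the exponential of its largest eigenvalue, I obtain $\mathbb{E}\{\mathrm{tr}\,e^{\theta\mathbf{F}_{s}}\}\leq K\exp(R^{-1}(e^{\theta R}-1)\,\xi_{max})$ for the upper tail. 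For the lower tail the same steps apply, except that the coefficient $R^{-1}(e^{\theta R}-1)$ is negative when $\theta<0$, so the largest eigenvalue of the exponent is selected by $\xi_{min}$, giving $\mathbb{E}\{\mathrm{tr}\,e^{\theta\mathbf{F}_{s}}\}\leq K\exp(R^{-1}(e^{\theta R}-1)\,\xi_{min})$.

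Finally, I would substitute $s=t\xi_{max}$ (respectively $s=t\xi_{min}$) and optimize the resulting scalar exponent over $\theta$. In both cases the stationarity condition is $e^{\theta R}=t$, i.e. $\theta=R^{-1}\log t$, which is positive for $t>1$ and negative for $t<1$, consistently with the sign required for each tail. For the upper tail, evaluating at this optimum and discarding a nonpositive term in the exponent yields the stated bound $K(e/t)^{t\xi_{max}/R}$ for $t\geq e$. For the lower tail, the optimized exponent $R^{-1}\xi_{min}(t-1-t\log t)$ is bounded above using the elementary relaxation $t-1-t\log t\leq-\tfrac{1}{2}(1-t)^{2}$ valid on $[0,1]$, which produces the Gaussian factor $\exp(-\tfrac{\xi_{min}}{2R}(1-t)^{2})$ and completes the proof.
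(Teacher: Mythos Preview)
The paper does not actually prove this lemma: it is quoted verbatim from \cite{TroppTails2010} (Remark~5.3) and used as a black box in the proof of Theorem~2. Your proposal is a faithful and correct outline of Tropp's original argument---the matrix Laplace transform combined with Lieb's concavity to obtain subadditivity of the matrix cumulant generating function, followed by the chord bound $e^{\theta\mathbf{F}_{i}}\preceq\mathbf{I}+R^{-1}(e^{\theta R}-1)\mathbf{F}_{i}$, operator monotonicity of the logarithm, and scalar optimization in~$\theta$. In particular, your handling of both tails and the final relaxations (dropping the $-\xi_{max}/R$ term for the upper tail, and the inequality $t-1-t\log t\leq-\tfrac{1}{2}(1-t)^{2}$ on $[0,1]$ for the lower tail) are exactly right. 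There is nothing to compare against in the paper itself, so your write-up would in fact supply a proof where the paper provides none.
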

\begin{proof}[Proof of Theorem 2]In order to make it more tractable,
we will divide the proof into the following subsections.

\subsection{Characterization of the SVD of $\boldsymbol{\Delta}$}

To begin with, observe that $\boldsymbol{\Delta}$ admits the rank-$1$
decomposition 
\begin{flalign}
\boldsymbol{\Delta} & =\boldsymbol{1}_{N\times1}\mathbf{p}^{\boldsymbol{\mathit{t}}}-2\sum_{i\in\mathbb{N}_{d}^{+}}\mathbf{x}_{i}\mathbf{x}_{i}^{\boldsymbol{\mathit{t}}}+\mathbf{p}\boldsymbol{1}_{N\times1}^{\boldsymbol{\mathit{t}}},
\end{flalign}
where 
\begin{flalign}
\mathbf{p} & =\left[\left\Vert \mathbf{p}_{1}\right\Vert _{2}^{2}\,\left\Vert \mathbf{p}_{2}\right\Vert _{2}^{2}\,\ldots\,\left\Vert \mathbf{p}_{N}\right\Vert _{2}^{2}\right]^{\boldsymbol{\mathit{t}}}\in\mathbb{R}_{+}^{N},\\
\mathbf{p}_{i} & \triangleq\left[x_{i1}\, x_{i2}\,\ldots\, x_{id}\right]^{\boldsymbol{\mathit{t}}}\in\mathcal{H}_{d},\quad i\in\mathbb{N}_{N}^{+}\quad\mathrm{and}\\
\mathbf{x}_{i} & \triangleq\left[x_{1i}\, x_{2i}\,\ldots\, x_{Ni}\right]^{\boldsymbol{\mathit{t}}}\in\mathcal{H}_{N},\quad i\in\mathbb{N}_{d}^{+}.
\end{flalign}
Then, we can equivalently express $\boldsymbol{\Delta}$ as 
\begin{equation}
\boldsymbol{\Delta}=\mathbf{X}\mathbf{D}\mathbf{X}^{\boldsymbol{\mathit{T}}},\label{eq:alter}
\end{equation}
where 
\begin{flalign}
\mathbf{X} & \triangleq col\left(\mathbf{X}_{1},\mathbf{X}_{2},\ldots,\mathbf{X}_{N}\right)\in\mathbb{R}^{N\times\left(d+2\right)},\\
\mathbf{X}_{i} & \triangleq\left[1\,\mathbf{p}_{i}^{\boldsymbol{\mathit{t}}}\,\left\Vert \mathbf{p}_{i}\right\Vert _{2}^{2}\right]\in\mathbb{R}^{1\times d+2},\quad i\in\mathbb{N}_{N}^{+}\quad\mathrm{and}\\
\mathbf{D} & \triangleq\mathrm{diag}\left(1,-2\boldsymbol{1}_{1\times d},1\right)\in\mathbb{R}^{\left(d+2\right)\times\left(d+2\right)}.
\end{flalign}
Since $\boldsymbol{\Delta}$ is a symmetric matrix, it can be easily
shown that its SVD possesses the special form 
\begin{flalign}
\boldsymbol{\Delta} & =\mathbf{U}\left|\boldsymbol{\Lambda}\right|\left(\mathbf{U}\cdot\mathrm{sign}\left(\boldsymbol{\Lambda}\right)\right)^{\boldsymbol{\mathit{T}}}\triangleq\mathbf{U}\boldsymbol{\Sigma}\mathbf{U}_{\pm}^{\boldsymbol{\mathit{T}}},\label{eq:SVD}\\
 & \equiv\sum_{i\in\mathbb{N}_{d+2}^{+}}\left|\lambda_{i}\left(\boldsymbol{\Delta}\right)\right|\mathbf{u}_{i}\left(\mathrm{sign}\left(\lambda_{i}\left(\boldsymbol{\Delta}\right)\right)\mathbf{u}_{i}^{\boldsymbol{\mathit{T}}}\right),
\end{flalign}
where $\left|\mathbf{A}\right|$ and $\mathrm{sign}\left(\mathbf{A}\right)$
denote the entrywise absolute value and sign operators on the matrix
$\mathbf{A}\in\mathbb{R}^{K\times L}$, respectively. In the expressions
above, $\boldsymbol{\Sigma}\equiv\left|\boldsymbol{\Lambda}\right|$,
$\boldsymbol{\Lambda}\in\mathbb{R}^{\left(d+2\right)\times\left(d+2\right)}$
is the diagonal matrix containing the (at most $d+2$) non zero eigenvalues
of $\boldsymbol{\Delta}$ in decreasing order of magnitude, denoted
as $\lambda_{i}\left(\boldsymbol{\Delta}\right),i\in\mathbb{N}_{d+2}^{+}$,
whose absolute values coincide with its singular values, that is,
$\sigma_{i}\left(\boldsymbol{\Delta}\right)\equiv\left|\lambda_{i}\left(\boldsymbol{\Delta}\right)\right|$,
and $\mathbf{U}\in\mathbb{R}^{N\times\left(d+2\right)}$ contains
as columns the eigenvectors of $\boldsymbol{\Delta}$ corresponding
to its non zero eigenvalues, denoted as $\mathbf{u}_{i},i\in\mathbb{N}_{d+2}^{+}$,
which essentially coincide with its left singular vectors.

Due to this special form of the SVD of $\boldsymbol{\Delta}$, if
we denote its column and row subspaces with $U$ and $U_{\pm}$, respectively,
it is true that 
\begin{flalign}
\mu\left(U_{\pm}\right) & =\frac{N}{d+2}\sup_{i\in\mathbb{N}_{N}^{+}}\sum_{k\in\mathbb{N}_{d+2}^{+}}\left(\mathrm{sign}\left(\lambda_{i}\left(\boldsymbol{\Delta}\right)\right)\mathbf{U}\left(i,k\right)\right)^{2}\nonumber \\
 & =\frac{N}{d+2}\sup_{i\in\mathbb{N}_{N}^{+}}\sum_{k\in\mathbb{N}_{d+2}^{+}}\left(\mathbf{U}\left(i,k\right)\right)^{2}\equiv\mu\left(U\right).
\end{flalign}
As a result, at least regarding the Assumption $\mathbf{A0}$, it
suffices to study the coherence of only one subspace, say $U$. It
is then natural to consider how the SVD of $\boldsymbol{\Delta}$,
given by (\ref{eq:SVD}), is related to its alternative representation
given by (\ref{eq:alter}).

Consider the thin QR decomposition of $\mathbf{X}$ given by $\mathbf{X}=\mathbf{V}\mathbf{A}$,
where $\mathbf{V}\in\mathbb{R}^{N\times\left(d+2\right)}$ with $\mathbf{V}^{\boldsymbol{\mathit{T}}}\mathbf{V}\equiv\mathbf{I}_{\left(d+2\right)}$
and $\mathbf{A}\in\mathbb{R}^{\left(d+2\right)\times\left(d+2\right)}$
constitutes an upper triangular matrix. Then, $\boldsymbol{\Delta}=\mathbf{V}\mathbf{A}\mathbf{D}\mathbf{A}^{\boldsymbol{\mathit{T}}}\mathbf{V}^{\boldsymbol{\mathit{T}}}$
and since the matrix $\mathbf{A}\mathbf{D}\mathbf{A}^{\boldsymbol{\mathit{T}}}\in\mathbb{R}^{\left(d+2\right)\times\left(d+2\right)}$
is symmetric by definition, the finite dimensional spectral theorem
implies that it is diagonalizable with eigendecomposition given by
$\mathbf{A}\mathbf{D}\mathbf{A}^{\boldsymbol{\mathit{T}}}=\mathbf{Q}\widetilde{\boldsymbol{\Lambda}}\mathbf{Q}^{\boldsymbol{\mathit{T}}}$,
where $\mathbf{Q}\in\mathbb{R}^{\left(d+2\right)\times\left(d+2\right)}$
with $\mathbf{Q}\mathbf{Q}^{\boldsymbol{\mathit{T}}}=\mathbf{Q}^{\boldsymbol{\mathit{T}}}\mathbf{Q}\equiv\mathbf{I}_{\left(d+2\right)}$
and $\widetilde{\boldsymbol{\Lambda}}\in\mathbb{R}^{\left(d+2\right)\times\left(d+2\right)}$
is diagonal, containing the eigenvalues of $\mathbf{A}\mathbf{D}\mathbf{A}^{\boldsymbol{\mathit{T}}}$.
Thus, we arrive at the expression 
\begin{equation}
\boldsymbol{\Delta}=\mathbf{V}\mathbf{Q}\widetilde{\boldsymbol{\Lambda}}\mathbf{Q}^{\boldsymbol{\mathit{T}}}\mathbf{V}^{\boldsymbol{\mathit{T}}}\equiv\mathbf{V}\mathbf{Q}|\widetilde{\boldsymbol{\Lambda}}|\left(\mathbf{V}\mathbf{Q}\cdot\mathrm{sign}\left(\widetilde{\boldsymbol{\Lambda}}\right)\right)^{\boldsymbol{\mathit{T}}},
\end{equation}
which constitutes a valid SVD of $\boldsymbol{\Delta}$, since $\left(\mathbf{V}\mathbf{Q}\right)^{\boldsymbol{\mathit{T}}}\mathbf{V}\mathbf{Q}\equiv\mathbf{I}_{\left(d+2\right)}$
and consequently, by the uniqueness of the singular values of a matrix,
$|\widetilde{\boldsymbol{\Lambda}}|\equiv\boldsymbol{\Sigma}$. Therefore,
we can set $\mathbf{U}=\mathbf{V}\mathbf{Q}$ and if $\mathbf{V}_{i}\in\mathbb{R}^{1\times\left(d+2\right)},i\in\mathbb{N}_{N}^{+}$
denotes the $i$-th row of $\mathbf{V}$, 
\begin{equation}
\mu\left(U\right)=\frac{N}{d+2}\sup_{i\in\mathbb{N}_{N}^{+}}\left\Vert \mathbf{V}_{i}\mathbf{Q}\right\Vert _{2}^{2}\equiv\frac{N}{d+2}\sup_{i\in\mathbb{N}_{N}^{+}}\left\Vert \mathbf{V}_{i}\right\Vert _{2}^{2}.
\end{equation}

\subsection{Bounding $\sup_{i\in\mathbb{N}_{N}^{+}}\left\Vert \mathbf{V}_{i}\right\Vert _{2}^{2}$}

Next, consider the hypotheses of the statement of Theorem 2. Since
the probability measure $P$ is atomless, the columns of $\mathbf{X}$
will be almost surely linearly independent. Then, $rank\left(\mathbf{A}\right)=d+2$
almost surely and consequently 
\begin{equation}
\left\Vert \mathbf{V}_{i}\right\Vert _{2}^{2}\leq\dfrac{\left\Vert \mathbf{X}_{i}\right\Vert _{2}^{2}}{\sigma_{min}^{2}\left(\mathbf{A}\right)},\quad\forall\, i\in\mathbb{N}_{N}^{+},
\end{equation}
where $\mathbf{X}_{i}\in\mathbb{R}^{1\times\left(d+2\right)},i\in\mathbb{N}_{N}^{+}$
denotes the $i$-th row of $\mathbf{X}$. Thus, in order to bound
$\mu\left(U\right)$ from above, it suffices to bound $\left\Vert \mathbf{X}_{i}\right\Vert _{2}^{2}$
from above and $\sigma_{min}^{2}\left(\mathbf{A}\right)$ from below.
Considering that $\mathbf{p}_{i}$ is bounded almost surely in $\mathcal{H}_{d}$,
we can easily bound $\left\Vert \mathbf{X}_{i}\right\Vert _{2}^{2}$
as 
\begin{equation}
\left\Vert \mathbf{X}_{i}\right\Vert _{2}^{2}\leq1+dc^{2}+d^{2}c^{4},\quad\forall\, i\in\mathbb{N}_{N}^{+}
\end{equation}
where $c\triangleq\max\left\{ \left|a\right|,\left|b\right|\right\} $.
Regarding $\sigma_{min}^{2}\left(\mathbf{A}\right)\equiv\lambda_{min}\left(\mathbf{A}^{\boldsymbol{\mathit{T}}}\mathbf{A}\right)$,
observe that the Grammian $\mathbf{A}^{\boldsymbol{\mathit{T}}}\mathbf{A}$
admits the rank-$1$ decomposition 
\begin{equation}
\mathbf{A}^{\boldsymbol{\mathit{T}}}\mathbf{A}\equiv\mathbf{A}^{\boldsymbol{\mathit{T}}}\mathbf{V}^{\boldsymbol{\mathit{T}}}\mathbf{V}\mathbf{A}=\mathbf{X}^{\boldsymbol{\mathit{T}}}\mathbf{X}=\sum_{i\in\mathbb{N}_{N}^{+}}\mathbf{X}_{i}^{\boldsymbol{\mathit{T}}}\mathbf{X}_{i}.\label{eq:Grammian}
\end{equation}
In general, it seems impossible to find a deterministic lower bound
for $\lambda_{min}\left(\mathbf{A}^{\boldsymbol{\mathit{T}}}\mathbf{A}\right)$.
For this reason, one could resort on probabilistic bounds that are
generally a lot easier to derive, providing considerably good estimates.
Towards this direction, below we will employ the matrix Chernoff bound
(Lemma 2), which fits perfectly to our bounding problem.

Observe that the sequence $\left\{ \mathbf{X}_{i}^{\boldsymbol{\mathit{T}}}\mathbf{X}_{i}\in\mathbb{R}^{N\times N}\right\} _{i\in\mathbb{N}_{N}^{+}}$
consists of $N$ statistically independent random Gramians, which
implies that $\mathbf{X}_{i}^{\boldsymbol{\mathit{T}}}\mathbf{X}_{i}\succeq\mathbf{0},\forall\, i\in\mathbb{N}_{N}^{+}$.
Also, since all these Gramians are rank-$1$, it can be trivially
shown that the one and only non zero eigenvalue of $\mathbf{X}_{i}^{\boldsymbol{\mathit{T}}}\mathbf{X}_{i}$
coincides with $\left\Vert \mathbf{X}_{i}\right\Vert _{2}^{2}$ and
consequently 
\begin{equation}
\lambda_{max}\left(\mathbf{X}_{i}^{\boldsymbol{\mathit{T}}}\mathbf{X}_{i}\right)\leq1+dc^{2}+d^{2}c^{4},\quad\forall\, i\in\mathbb{N}_{N}^{+}.
\end{equation}
As a result, the hypotheses of Lemma 2 are satisfied and our next
task involves specifying the constant $\xi_{min}$. Since the coordinates
$x_{ij},\left(i,j\right)\in\mathbb{N}_{N}^{+}\times\mathbb{N}_{d}^{+}$
are independent and identically distributed, it can be easily shown
that 
\begin{flalign}
\mathbb{E}\left\{ \mathbf{A}^{\boldsymbol{\mathit{T}}}\mathbf{A}\right\}  & =N\mathbb{E}\left\{ \mathbf{X}_{1}^{\boldsymbol{\mathit{T}}}\mathbf{X}_{1}\right\} \nonumber \\
 & =N\begin{bmatrix}1 & \mathbf{0}_{1\times d} & dm_{2}\\
\mathbf{0}_{d\times1} & m_{2}\mathbf{I}_{d} & m_{3}\mathbf{1}_{d\times1}\\
dm_{2} & m_{3}\mathbf{1}_{1\times d} & d\left(m_{4}-m_{2}^{2}\right)+d^{2}m_{2}^{2}
\end{bmatrix}\nonumber \\
 & \triangleq N\mathbf{R}_{d}.\label{eq:RRRRRR}
\end{flalign}
Thus, 
\begin{flalign*}
\xi_{min} & =\lambda_{min}\left(\mathbb{E}\left\{ \mathbf{A}^{\boldsymbol{\mathit{T}}}\mathbf{A}\right\} \right)=N\lambda_{min}\left(\mathbf{R}_{d}\right)
\end{flalign*}
and it suffices to characterize the eigenvalues of $\mathbf{R}_{d}\in\mathbb{R}^{\left(d+2\right)\times\left(d+2\right)}$.

\subsubsection{Positive Definiteness of $\mathbf{R}_{d}$}

We first argue that $\mathbf{R}_{d}$ is a positive definite matrix.
We can prove this argument using the strong Law of Large Numbers (LLN).
Since $\mathbf{A}$ is almost surely of full rank, the Gramian $\mathbf{A}^{\boldsymbol{\mathit{T}}}\mathbf{A}$
will be almost surely positive definite. Consider the sequence $\left\{ \mathbf{A}_{i}^{\boldsymbol{\mathit{T}}}\mathbf{A}_{i}\right\} _{i\in\mathbb{N}}$,
consisting of independent realizations of $\mathbf{A}^{\boldsymbol{\mathit{T}}}\mathbf{A}$.
Of course, $\mathbf{A}_{i}^{\boldsymbol{\mathit{T}}}\mathbf{A}_{i}\succ\mathbf{0},\forall\, i\in\mathbb{N}$.
Then, the strong LLN implies that 
\begin{equation}
\frac{1}{N}\sum_{i\in\mathbb{N}_{N}}\mathbf{A}_{i}^{\boldsymbol{\mathit{T}}}\mathbf{A}_{i}\overset{a.s.}{\longrightarrow}\mathbb{E}\left\{ \mathbf{A}^{\boldsymbol{\mathit{T}}}\mathbf{A}\right\} ,\;\mathrm{as}\; N\rightarrow\infty.
\end{equation}
As a consequence of the fact that the set of positive definite matrices
is closed under linear combinations with nonnegative weights, $\mathbb{E}\left\{ \mathbf{A}^{\boldsymbol{\mathit{T}}}\mathbf{A}\right\} \succ\mathbf{0}$
and, therefore, by (\ref{eq:RRRRRR}), $\mathbf{R}_{d}\succ\mathbf{0},\forall\, d\in\mathbb{N}^{+}$.

\subsubsection{Characterization of the eigenvalues of $\mathbf{R}_{d}$}

Next, in order to find the eigenvalues of $\mathbf{R}_{d}$, we would
like to solve the equation 
\begin{equation}
\det\left(\mathbf{R}_{d}-\lambda\mathbf{I}_{d+2}\right)=0.\label{eq:det}
\end{equation}
Using a well known identity, we can rewrite (\ref{eq:det}) as 
\begin{gather*}
\det\left(\mathbf{R}_{d}-\lambda\mathbf{I}_{d+2}\right)=\\
\det\left(\mathbf{E}-\lambda\mathbf{I}_{d+1}\right)\det\left((\mathbf{H}-\lambda)-\mathbf{G}\left(\mathbf{E}-\lambda\mathbf{I}_{d+1}\right)^{-1}\mathbf{F}\right)=\\
\left(1-\lambda\right)\left(m_{2}-\lambda\right)^{d}\cdot\\
\cdot\left(d\left(m_{4}-m_{2}^{2}\right)+d^{2}m_{2}^{2}-\lambda-\dfrac{d^{2}m_{2}^{2}}{1-\lambda}-\dfrac{dm_{3}^{2}}{m_{2}-\lambda}\right)=0,
\end{gather*}
where
\begin{flalign*}
\mathbf{E} & \triangleq\begin{bmatrix}1 & \mathbf{0}_{1\times d}\\
\mathbf{0}_{d\times1} & m_{2}\mathbf{I}_{d}
\end{bmatrix},\,\mathbf{F}\triangleq\begin{bmatrix}dm_{2}\\
m_{3}\mathbf{1}_{d\times1}
\end{bmatrix},\\
\mathbf{G} & \triangleq\begin{bmatrix}dm_{2} & m_{3}\mathbf{1}_{1\times d}\end{bmatrix},\,\mathbf{H}\triangleq d\left(m_{4}-m_{2}^{2}\right)+d^{2}m_{2}^{2},
\end{flalign*}
and which, after lots of dull algebra, yields the equation 
\begin{equation}
\left(m_{2}-\lambda\right)^{d-1}\left(\sum_{i\in\mathbb{N}_{3}}\alpha_{i}\lambda^{i}\right)=0,\label{eq:eigen}
\end{equation}
where the coefficients $\left\{ \alpha_{i}\right\} _{i\in\mathbb{N}_{3}}$
are given by (\ref{eq:alpha0}), (\ref{eq:alpha1}), (\ref{eq:alpha2})
and (\ref{eq:alpha3}), respectively.

Thus, $\mathbf{R}_{d}$ has an eigenvalue $\lambda_{0}\triangleq m_{2}$
with multiplicity $d-1$ for sure, and three additional eigenvalues
$\lambda_{1},\lambda_{2},\lambda_{3}$, which of course are the roots
to the cubic polynomial appearing in (\ref{eq:eigen}) and can be
computed easily in closed form. Since $\mathbf{R}_{d}\succ\mathbf{0},\forall\, d\in\mathbb{N}^{+}$,
all its eigenvalues must be strictly positive and, therefore, 
\begin{equation}
\xi_{min}=N\lambda_{min}\left(\mathbf{R}_{d}\right)=N\lambda^{*},
\end{equation}
where $\lambda^{*}\triangleq\min\left\{ \lambda_{1},\lambda_{2},\lambda_{3},m_{2}\right\} \in\mathbb{R}_{++}$.

\subsection{Putting it altogether}

We can now directly apply the matrix Chernoff bound (Lemma 2) to upper
bound the probability of the event $\mathcal{Z}\triangleq\left\{ \sigma_{min}^{2}\left(\mathbf{A}\right)\leq tN\lambda^{*}\right\} ,\forall\, t\in\left[0,1\right]$
as 
\begin{equation}
\mathbb{P}\left(\mathcal{Z}\right)\leq\left(d+2\right)\exp\left(-\dfrac{N\left(1-t\right)^{2}}{2\theta}\right)\triangleq\epsilon\left(t\right),\label{eq:inequality}
\end{equation}
where 
\begin{equation}
\theta\triangleq\dfrac{1+dc^{2}+d^{2}c^{4}}{\lambda^{*}}.
\end{equation}
The inequality (\ref{eq:inequality}) is equivalent to the statement
that $\sigma_{min}^{2}\left(\mathbf{A}\right)\geq tN\lambda^{*}$
with probability at least $1-\epsilon\left(t\right),\forall\, t\in\left[0,1\right].$
It is natural to select $N$ such that $1-\epsilon\left(t\right)\in\left[0,1\right]$.
Since the involved exponential function is strictly decreasing in
$N$, we can choose a $\gamma\in\left[0,1\right]$ such that $\epsilon\left(t\right)\leq\gamma$,
yielding the condition 
\begin{equation}
N\geq\dfrac{2\theta\left(\log\left(d+2\right)-\log\left(\gamma\right)\right)}{\left(1-t\right)^{2}}.\label{eq:condition}
\end{equation}
Consequently, as long as (\ref{eq:condition}) holds, the inequality
$\sigma_{min}^{2}\left(\mathbf{A}\right)\geq tN\lambda^{*}$ will
hold with probability at least $1-\gamma,$$\forall\,\gamma\in\left[0,1\right]$.

Hence, since $\left\Vert \mathbf{X}_{i}\right\Vert _{2}^{2}\leq1+dc^{2}+d^{2}c^{4},\forall\, i\in\mathbb{N}_{N}^{+}$
with probability $1$, we can write 
\begin{flalign}
\mu\left(U\right) & \leq\frac{N}{d+2}\sup_{i\in\mathbb{N}_{N}^{+}}\dfrac{\left\Vert \mathbf{X}_{i}\right\Vert _{2}^{2}}{\sigma_{min}^{2}\left(\mathbf{A}\right)}=\dfrac{1+dc^{2}+d^{2}c^{4}}{t\lambda^{*}\left(d+2\right)}\nonumber \\
 & \equiv\dfrac{\theta}{t\left(d+2\right)}\triangleq\mu_{0},\quad\forall\, t\in\left[0,1\right],
\end{flalign}
holding true with probability at least $1-\gamma,\forall\,\gamma\in\left[0,1\right]$
and with $N$ satisfying (\ref{eq:condition}).

Finally, regarding the Assumption $\mathbf{A1}$, by a simple argument
involving the Cauchy - Schwarz inequality, it can be shown that \cite{Candes&Recht2009}
\begin{equation}
\mu_{1}\triangleq\mu_{0}\sqrt{d+2}=\dfrac{\theta}{t\sqrt{d+2}},\quad\forall\, t\in\left[0,1\right],
\end{equation}
under the same of course circumstances as $\mu_{0}$, therefore completing
the proof.\end{proof}

If the coordinates of the nodes are drawn from a symmetric probability
distribution, the following more compact theorem holds. 
\begin{thm}
Consider $N$ points (nodes) lying almost surely in a $d$-dimensional
hypercube defined as $\mathcal{H}_{d}\triangleq\left[-a,a\right]^{d}$
with $a>0$ and let the rest of the hypotheses of Theorem 2 hold,
with the additional assumption $m_{3}\equiv0$. Pick a $t\in\left[0,1\right]$
and a probability of failure $\gamma\in\left[0,1\right]$. Then, as
long as the condition (\ref{eq:Condition_N}) holds, the associated
Euclidean distance matrix $\boldsymbol{\Delta}$ with worst case rank
$d+2$ obeys the assumptions $\mathbf{A0}$ and $\mathbf{A1}$ with
constants $\mu_{0}$ and $\mu_{1}$ defined as in (\ref{eq:coherence_constants}),
respectively, with probability of success at least $1-\gamma$ and
\begin{equation}
\theta\triangleq\dfrac{2\left(1+da^{2}+d^{2}a^{4}\right)}{\lambda^{*}},
\end{equation}
where 
\begin{flalign}
\lambda^{*} & \triangleq\min\left\{ \zeta-\sqrt{\zeta^{2}-4d\left(m_{4}-m_{2}^{2}\right)},2m_{2}\right\} 
\end{flalign}
and $\zeta\triangleq d\left(m_{4}-m_{2}^{2}\right)+d^{2}m_{2}^{2}+1$. 
\end{thm}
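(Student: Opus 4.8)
The plan is to specialize the proof of Theorem~2; essentially nothing changes except that the spectrum of $\mathbf{R}_{d}$ can now be written down in closed form. First I would observe that the entire development of the subsections ``Characterization of the SVD of $\boldsymbol{\Delta}$'' and ``Bounding $\sup_{i\in\mathbb{N}_{N}^{+}}\left\Vert \mathbf{V}_{i}\right\Vert _{2}^{2}$'', up to and including the identity $\xi_{min}=N\lambda_{min}\left(\mathbf{R}_{d}\right)$, never uses the value of $m_{3}$, so it carries over verbatim; with $\mathcal{H}_{d}=\left[-a,a\right]^{d}$ we simply have $c=a$, hence $\left\Vert \mathbf{X}_{i}\right\Vert _{2}^{2}\leq1+da^{2}+d^{2}a^{4}$ almost surely for every $i$, and $\lambda_{max}\left(\mathbf{X}_{i}^{\boldsymbol{\mathit{T}}}\mathbf{X}_{i}\right)$ is bounded by the same quantity.

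Next I would set $m_{3}=0$ in $\mathbf{R}_{d}$ from (\ref{eq:RRRRRR}). This annihilates the coupling between the $d$ ``middle'' coordinates and the last coordinate, so that a symmetric permutation of indices brings $\mathbf{R}_{d}$ to block-diagonal form with blocks $m_{2}\mathbf{I}_{d}$ and $\left[\begin{smallmatrix}1 & dm_{2}\\ dm_{2} & \mathbf{H}\end{smallmatrix}\right]$, where $\mathbf{H}\triangleq d\left(m_{4}-m_{2}^{2}\right)+d^{2}m_{2}^{2}$. Hence $\mathbf{R}_{d}$ has the eigenvalue $m_{2}$ with multiplicity $d$ together with the two eigenvalues of the $2\times2$ block, which are real (the block is symmetric) and, since $\mathrm{trace}=1+\mathbf{H}=\zeta$ and $\det=\mathbf{H}-d^{2}m_{2}^{2}=d\left(m_{4}-m_{2}^{2}\right)$, equal $\tfrac{1}{2}\bigl(\zeta\pm\sqrt{\zeta^{2}-4d\left(m_{4}-m_{2}^{2}\right)}\bigr)$. (Equivalently, one checks directly that $\lambda=m_{2}$ is a root of the cubic $\sum_{i\in\mathbb{N}_{3}}\alpha_{i}\lambda^{i}$ of (\ref{eq:eigen}) when $m_{3}=0$, so the cubic factors into $\left(m_{2}-\lambda\right)$ times this quadratic and the factor $\left(m_{2}-\lambda\right)^{d-1}$ appearing in (\ref{eq:eigen}) is promoted to $\left(m_{2}-\lambda\right)^{d}$.) Since $\mathbf{R}_{d}\succ\mathbf{0}$ still holds, verbatim, by the strong-LLN argument of Theorem~2, all these eigenvalues are strictly positive and
\begin{equation*}
\lambda_{min}\left(\mathbf{R}_{d}\right)=\min\Bigl\{\tfrac{1}{2}\bigl(\zeta-\sqrt{\zeta^{2}-4d\left(m_{4}-m_{2}^{2}\right)}\bigr),\,m_{2}\Bigr\}=\dfrac{\lambda^{*}}{2}.
\end{equation*}

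Finally I would feed $\xi_{min}=N\lambda^{*}/2$ and the bound $\lambda_{max}\left(\mathbf{X}_{i}^{\boldsymbol{\mathit{T}}}\mathbf{X}_{i}\right)\leq1+da^{2}+d^{2}a^{4}$ into Lemma~2 exactly as in the subsection ``Putting it altogether'', obtaining $\mathbb{P}\left(\mathcal{Z}\right)\leq\left(d+2\right)\exp\bigl(-N\left(1-t\right)^{2}/\left(2\theta\right)\bigr)$ with $\theta=\left(1+da^{2}+d^{2}a^{4}\right)/\lambda_{min}\left(\mathbf{R}_{d}\right)=2\left(1+da^{2}+d^{2}a^{4}\right)/\lambda^{*}$; forcing this bound to be at most $\gamma$ gives exactly the condition (\ref{eq:Condition_N}), and then $\mu\left(U\right)\leq\theta/\left(t\left(d+2\right)\right)\triangleq\mu_{0}$ and $\mu_{1}=\mu_{0}\sqrt{d+2}$ follow as before. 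I do not expect a genuine obstacle here beyond bookkeeping: the one thing to verify is that $\lambda^{*}$ is well defined and positive, and both are automatic --- the discriminant $\zeta^{2}-4d\left(m_{4}-m_{2}^{2}\right)$ is nonnegative because the $2\times2$ block is real symmetric, and positivity of the smaller root follows from $\mathbf{R}_{d}\succ\mathbf{0}$.
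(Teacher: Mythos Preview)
Your proposal is correct and is exactly the specialization the paper has in mind: the authors state that the proof of Theorem~3 is ``almost identical to that of Theorem~2'' and omit it, and what you wrote is precisely that specialization, with the only new ingredient being the explicit block-diagonalization of $\mathbf{R}_{d}$ when $m_{3}=0$ and the resulting closed-form spectrum. Your bookkeeping identifying $\lambda_{min}\left(\mathbf{R}_{d}\right)=\lambda^{*}/2$, and hence $\theta=2\left(1+da^{2}+d^{2}a^{4}\right)/\lambda^{*}$, matches the statement exactly.
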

The proof of Theorem 3 is almost identical to that of Theorem 2. Therefore,
it is omitted. Further, if the node coordinates are drawn independently
from $\mathcal{U}\left[-1,1\right]$, the following corollary constitutes
a direct consequence of Theorem 2 and is also presented without proof. 
\begin{cor}
Consider $N$ points (nodes) lying almost surely in the $d$-dimensional
hypercube defined as $\mathcal{H}_{d}\triangleq\left[-1,1\right]^{d}$
and let $\mathbf{p}_{i}\triangleq\left[x_{i1}\, x_{i2}\,\ldots\, x_{id}\right]^{\boldsymbol{\mathit{t}}}\in\mathcal{H}_{d},i\in\mathbb{N}_{N}^{+}$
denote the position of node $i$, where $x_{ij}\in\mathcal{H}_{1},j\in\mathbb{N}_{d}^{+}$
represents its respective $j$-th position coordinate. Assume that
each $x_{ij},\left(i,j\right)\in\mathbb{N}_{N}^{+}\times\mathbb{N}_{d}^{+}$
is drawn independently and identically according to the uniform in
$\left[-1,1\right]$ probability measure, $\mathcal{U}\left[-1,1\right]$.
Pick a $t\in\left[0,1\right]$ and a probability of failure $\gamma\in\left[0,1\right]$.
Then, as long as the condition (\ref{eq:Condition_N}) holds, the
associated Euclidean distance matrix $\boldsymbol{\Delta}$ with worst
case rank $d+2$ obeys the assumptions $\mathbf{A0}$ and $\mathbf{A1}$
with constants $\mu_{0}$ and $\mu_{1}$ defined as in (\ref{eq:coherence_constants}),
respectively, with probability of success at least $1-\gamma$ and
\begin{equation}
\theta\triangleq\dfrac{90\left(1+d+d^{2}\right)}{\zeta-\sqrt{\zeta^{2}-720d}},
\end{equation}
where $\zeta\triangleq5d^{2}+4d+45$. 
\end{cor}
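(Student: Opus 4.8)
The plan is to specialize Theorem~2 to the case $P=\mathcal{U}\left[-1,1\right]$, so that I only need to compute the moments $m_2,m_3,m_4$ and then re-derive the constant $\theta$ from the formulas already established. First I would record the elementary moments of the uniform law on $\left[-1,1\right]$: since the density is $1/2$ on that interval, $m_1=m_3=0$ (the distribution is symmetric, so the extra hypothesis $m_3\equiv 0$ of Theorem~3 is automatically met), $m_2=\int_{-1}^{1}x^2\,dx/2 = 1/3$, and $m_4=\int_{-1}^{1}x^4\,dx/2 = 1/5$. Since the hypercube here is $\left[-1,1\right]^d$, the parameter $c=\max\{|a|,|b|\}$ equals $1$, whence the numerator $1+dc^2+d^2c^4 = 1+d+d^2$; combined with the factor $2$ appearing in the symmetric case of Theorem~3 this gives the $2\left(1+d+d^2\right)$ that, after clearing denominators, should produce the $90\left(1+d+d^2\right)$ in the claimed $\theta$.

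Next I would invoke the symmetric form of the characteristic equation from Theorem~3. With $m_3=0$, the cubic factor in~\eqref{eq:eigen} degenerates: one checks that $\lambda_0=m_2$ picks up an extra multiplicity and the remaining relevant eigenvalues solve a quadratic, exactly as recorded in Theorem~3, namely $\lambda^{*}=\min\{\zeta-\sqrt{\zeta^2-4d(m_4-m_2^2)},\,2m_2\}$ with $\zeta = d(m_4-m_2^2)+d^2 m_2^2 + 1$. Substituting $m_4-m_2^2 = 1/5 - 1/9 = 4/45$ and $m_2^2 = 1/9$ gives $\zeta = 4d/45 + d^2/9 + 1 = (4d + 5d^2 + 45)/45 = (5d^2+4d+45)/45$, which matches the stated $\zeta$ up to the common factor $1/45$; likewise $4d(m_4-m_2^2) = 16d/45$. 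Pulling the factor $1/45$ out of the square root carefully (so that $\sqrt{\zeta^2 - 16d/45} = \tfrac{1}{45}\sqrt{(5d^2+4d+45)^2 - 720d}$) and then $2m_2 = 2/3 = 30/45$, I get $\lambda^{*} = \tfrac{1}{45}\min\{(5d^2+4d+45)-\sqrt{(5d^2+4d+45)^2-720d},\,30\}$.

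Finally I would assemble $\theta = 2\left(1+d+d^2\right)/\lambda^{*}$. The factor $1/45$ in $\lambda^{*}$ inverts to $45$ in the numerator, producing $\theta = 90\left(1+d+d^2\right)\big/\min\{(5d^2+4d+45)-\sqrt{(5d^2+4d+45)^2-720d},\,30\}$, which is precisely the claimed expression once one argues that, for the relevant range of $d$, the first branch of the minimum is the active one (a short monotonicity check: for $d\ge 1$ one has $(5d^2+4d+45)-\sqrt{(5d^2+4d+45)^2-720d}\le 30$, equivalently $720d \ge 30\bigl(2(5d^2+4d+45)-30\bigr) = 300d^2+240d+2700$, which actually fails for small $d$, so in fact the cleaner statement is just to leave the minimum in place — and indeed the corollary as quoted silently takes the first branch, so I would note that the quoted $\theta$ is the value under the assumption that this branch realizes the minimum). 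The condition~\eqref{eq:Condition_N} and the coherence constants~\eqref{eq:coherence_constants} then carry over verbatim from Theorem~2, completing the proof.

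The only genuine obstacle here is bookkeeping: keeping the common denominator $45$ straight when it is pulled out of a nested square root, and being honest about which branch of the $\min$ is intended in the final $\theta$. There is no new probabilistic content — everything reduces to substituting three rational numbers into formulas already proved, so the ``proof'' is really a verification, which is exactly why the paper states the corollary without proof.
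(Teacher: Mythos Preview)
Your approach is exactly what the paper intends: specialize Theorem~3 (the symmetric case of Theorem~2) to $P=\mathcal{U}[-1,1]$, compute $m_2=1/3$, $m_4=1/5$, $m_4-m_2^2=4/45$, clear the factor $1/45$, and read off $\theta$. The paper itself offers no proof, so there is nothing further to compare.

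However, your closing monotonicity check contains two slips that lead you to the wrong conclusion. With $A\triangleq 5d^{2}+4d+45$, the inequality $A-\sqrt{A^{2}-720d}\le 30$ is, after moving $30$ across and squaring (both sides are positive for $d\ge 1$), equivalent to
\[
720d \;\le\; 60A-900 \;=\; 300d^{2}+240d+1800,
\]
not $720d\ge 300d^{2}+240d+2700$ as you wrote (the inequality flips when you multiply by $-1$, and $30\bigl(2A-30\bigr)=300d^{2}+240d+1800$, not $2700$). The correct inequality rearranges to $300d^{2}-480d+1800\ge 0$, whose discriminant is negative, so it holds for \emph{all} real $d$. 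Hence the first branch is \emph{always} the active one in the minimum, the corollary's stated $\theta$ is exactly what Theorem~3 produces, and no caveat is needed.
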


\section{Discussion Regarding Related Results}

It is important to note that for the very special case where the node
coordinates are drawn from $\mathcal{U}\left[-1,1\right]$ (see Corollary
1), to the best of the authors' knowledge, there exist two previously
published closely related results. In particular, see Theorem
1 in \cite{MontanariOhPositioning} and Remark 4.1.1 and the respective
proof in \cite{OhThesis2010}, as well as Lemma 2 and the respective
proof in \cite{EkamRam2012}.

However, there are some subtle technical issues in the aforementioned results. More
specifically, in both proofs, the respective authors claim that the
minimum eigenvalue of the matrix 
\begin{equation}
\begin{bmatrix}1 & \mathbf{0}_{1\times d} & \dfrac{d}{3}\\
\mathbf{0}_{d\times1} & \dfrac{1}{3}\mathbf{I}_{d} & \mathbf{0}_{d\times1}\\
\dfrac{d}{3} & \mathbf{0}_{1\times d} & \left(\dfrac{d}{3}\right)^{2}+\dfrac{4d}{45}
\end{bmatrix},
\end{equation}
which constitutes a special case of the matrix $\mathbf{R}_{d}$ defined
in (\ref{eq:RRRRRR}), equals $1/3$. One can easily confirm that this is not true.


Additionally, there is a minor oversight in both proofs, where it is implicitly
stated  that
since a Euclidean distance matrix is symmetric, its (real) left singular
vectors coincide with its right ones. This argument would be
correct only if one could prove that all EDMs belong to the
cone of symmetric and positive semidefinite matrices.
Further, such an argument is also incorrect since one can easily prove by
counterexample that there is at least one EDM that is not
positive semidefinite, i.e., with spectrum containing at least
one negative eigenvalue.
In the case of simply symmetric
matrices, the SVD takes the special form of (\ref{eq:SVD}). However,
we should note here that this mistake by itself does not affect the
outcome of the proofs, since the coherence of a matrix essentially
depends on the norm of the rows of the matrices whose columns constitute
the sets of its left and right singular vectors, respectively.

\section{Conclusion}

To the best of the authors' knowledge, Corollary 1 provides a novel result regarding the coherence of EDMs, for the special
case where the node coordinates are independently drawn from $\mathcal{U}\left[-1,1\right]$.
Furthermore, our main result, Theorem 1 presented above, provides
a substantial generalization to Corollary 1, essentially covering
any case where the node coordinates are independently drawn according
to an arbitrary, non - singular probability measure. Therefore, the
theoretical results presented in this paper can provide strong evidence
as well as sufficient conditions under which the EDM completion problem
can be successfully solved with high probability, a fact that also
justifies their direct practical applicability in various modern applications
which involve EDMs, such as sensor network localization and estimation
of the second order statistics of channel state information in wireless
communication networks.

 \bibliographystyle{IEEEtran}
\bibliography{IEEEabrv}

\end{document}